\documentclass[12pt]{article}
\usepackage{amsmath, amssymb,amsfonts,bbm,verbatim,multirow}
\usepackage[colorlinks,citecolor=blue,urlcolor=blue,breaklinks]{hyperref}
\usepackage{graphicx,xcolor}
\usepackage{amsthm}
\usepackage{fullpage}
\usepackage{natbib}
\usepackage{setspace}
\usepackage{enumerate}
\usepackage{array}
\usepackage[small]{caption}

\newtheorem{thm}{Theorem}

\newtheorem{cor}[thm]{Corollary}
\numberwithin{equation}{section}

\DeclareMathOperator*{\argmin}{argmin}

\begin{document}

\title{Comments on: High-dimensional simultaneous inference with the bootstrap}
\author{Richard A. Lockhart and Richard J. Samworth \\ Simon Fraser University and University of Cambridge \\ lockhart@sfu.ca, r.samworth@statslab.cam.ac.uk}

\maketitle





\section{Introduction}

We congratulate the authors on their stimulating contribution to the burgeoning high-dimensional inference literature.  The bootstrap offers such an attractive methodology in these settings, but it is well-known that its naive application in the context of shrinkage/superefficiency is fraught with danger \citep[e.g.][]{Samworth2003,ChatterjeeLahiri2011}.  The authors show how these perils can be elegantly sidestepped by working with de-biased, or de-sparsified, versions of estimators.

In this discussion, we consider alternative approaches to individual and simultaneous inference in high-dimensional linear models, and retain the notation of the paper.

\section{Why penalise coefficients of variables of interest?}

Suppose that for some, presumably small, set $G \subseteq \{1,\ldots,p\}$, we want a confidence set for $\beta_G^0$.  Much of the recent literature, including the paper under discussion, proceeds by constructing an initial estimator, such as the Lasso estimator $\hat{\beta}$, and then attempting to de-bias it.  Our starting point is the following provocative question: since we know in advance the set of variables we are interested in, why would we want to penalise these coefficients in the first place?  Of course, it is standard practice not to penalise the intercept term in high-dimensional linear models, to preserve location equivariance, but we now consider taking this one stage further.  More precisely, consider the linear model
\[
Y = \mathbf{X}\beta^0 + \epsilon,
\]
where the columns of $\mathbf{X}$ have Euclidean length $n^{1/2}$, where $\mathbf{X}_G^T\mathbf{X}_G$ is positive definite, and where, for simplicity, we assume that $\epsilon \sim N_n(0,\sigma^2I)$.  We further assume that the set $S := \{j : \beta_j^0 \neq 0\}$ of signal variables has cardinality $s$, and let $N := \{1,\ldots,p\} \setminus S$.  For $\lambda > 0$, let
\[
(\hat{\beta}_G,\hat{\beta}_{-G}) := \argmin_{(\beta_G,\beta_{-G}) \in \mathbb{R}^{|G|} \times \mathbb{R}^{p-|G|}} \frac{1}{n}\|Y - \mathbf{X}_G\beta_G - \mathbf{X}_{-G}\beta_{-G}\|_2^2 + \lambda \|\beta_{-G}\|_1,
\]
where we emphasise that $\|\beta_G\|_1$ is unpenalised.  For fixed $\beta_{-G} \in \mathbb{R}^{p-|G|}$, the solution in the first argument is given by ordinary least squares:
\[
\hat{\beta}_G(\beta_{-G}) := (\mathbf{X}_G^T\mathbf{X}_G)^{-1}\mathbf{X}_G^T(Y - \mathbf{X}_{-G}\beta_{-G}).
\] 
We therefore find that 
\begin{equation}
\label{Eq:Lasso}
\hat{\beta}_{-G} = \argmin_{\beta_{-G} \in \mathbb{R}^{p-|G|}} \frac{1}{n}\|(I-P_G)(Y - \mathbf{X}_{-G}\beta_{-G})\|_2^2 + \lambda \|\beta_{-G}\|_1,
\end{equation}
where $P_G := \mathbf{X}_G(\mathbf{X}_G^T\mathbf{X}_G)^{-1}\mathbf{X}_G^T$ denotes the matrix representing an orthogonal projection onto the column space of $\mathbf{X}_G$.  In other words, $\hat{\beta}_{-G}$ is simply the Lasso solution with response and design matrix pre-multiplied by $(I-P_G)$.  Moreover,
\[
\hat{\beta}_G = \hat{\beta}_G(\hat{\beta}_{-G}) = (\mathbf{X}_G^T\mathbf{X}_G)^{-1}\mathbf{X}_G^T(Y - \mathbf{X}_{-G}\hat{\beta}_{-G}).
\]
For our theoretical analysis of $\hat{\beta}_G$, we will require the following compatibility condition:
\begin{description}
\item[\textbf{(A1)}] There exists $\phi_0 > 0$ such that for all $b \in \mathbb{R}^{p-|G|}$ with $\|b_N\|_1 \leq 3\|b_S\|_1$, we have
\[
\|b_S\|_1^2 \leq \frac{s\|(I-P_G)\mathbf{X}_{-G}b\|_2^2}{n\phi_0^2}.
\]
\end{description}
The theorem below is only a small modification of existing results in the literature \citep[e.g.][]{BRT2009}, but for completeness we provide a proof in the Appendix.  
\begin{thm}
\label{Thm:BRT}
Assume \textbf{(A1)}, and let $\lambda := A\sigma\sqrt{\frac{\log p}{n}}$.  Then with probability at least $1 - p^{-(A^2/8-1)}$,
\[
\frac{1}{n}(\hat{\beta}_{-G} - \beta_{-G}^0)^T\mathbf{X}_{-G}^T(I-P_G)\mathbf{X}_{-G}(\hat{\beta}_{-G} - \beta_{-G}^0) + \frac{\lambda}{2}\|\hat{\beta}_{-G} - \beta_{-G}^0\|_1 \leq \frac{3A^2}{\phi_0^2}\frac{\sigma^2s\log p}{n}.
\]
\end{thm}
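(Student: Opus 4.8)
The plan is to recognise \eqref{Eq:Lasso} as an ordinary Lasso and to run the classical basic-inequality argument of \citet{BRT2009}, but relative to the projected data. Write $\tilde{\mathbf{X}} := (I-P_G)\mathbf{X}_{-G}$ and $\tilde{Y} := (I-P_G)Y$, so that $\hat{\beta}_{-G}$ minimises $\frac{1}{n}\|\tilde{Y}-\tilde{\mathbf{X}}\beta_{-G}\|_2^2 + \lambda\|\beta_{-G}\|_1$. Since $(I-P_G)\mathbf{X}_G = 0$ and $I-P_G$ is a symmetric idempotent, we have $\tilde{Y} = \tilde{\mathbf{X}}\beta_{-G}^0 + \tilde{\epsilon}$ with $\tilde{\epsilon} := (I-P_G)\epsilon$, and the quadratic form in the theorem is exactly $\frac{1}{n}\|\tilde{\mathbf{X}}(\hat{\beta}_{-G}-\beta_{-G}^0)\|_2^2$. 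The problem is thereby reduced to a standard Lasso oracle inequality for the triple $(\tilde{Y},\tilde{\mathbf{X}},\tilde{\epsilon})$; the only nonstandard feature is the degenerate noise covariance $\sigma^2(I-P_G)$, which is harmless because every bound we need involves only univariate marginals of $\tilde{\epsilon}$.

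First I would isolate the stochastic part. Writing $\delta := \hat{\beta}_{-G}-\beta_{-G}^0$, optimality of $\hat{\beta}_{-G}$ yields the basic inequality $\frac{1}{n}\|\tilde{\mathbf{X}}\delta\|_2^2 + \lambda\|\hat{\beta}_{-G}\|_1 \le \frac{2}{n}\tilde{\epsilon}^T\tilde{\mathbf{X}}\delta + \lambda\|\beta_{-G}^0\|_1$, so everything hinges on controlling $\frac{2}{n}\|\tilde{\mathbf{X}}^T\tilde{\epsilon}\|_\infty$. The $j$th entry of $\tilde{\mathbf{X}}^T\tilde{\epsilon}$ equals $\mathbf{X}_{-G,j}^T(I-P_G)\epsilon$, a centred Gaussian with variance $\sigma^2\|(I-P_G)\mathbf{X}_{-G,j}\|_2^2 \le \sigma^2\|\mathbf{X}_{-G,j}\|_2^2 = \sigma^2 n$, since an orthogonal projection cannot increase a column's length and the columns have length $n^{1/2}$. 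A Gaussian tail bound together with a union bound over the at most $p$ coordinates then shows that, with probability of the stated form $1-p^{-(A^2/8-1)}$, the event $\mathcal{T} := \bigl\{\frac{2}{n}\|\tilde{\mathbf{X}}^T\tilde{\epsilon}\|_\infty \le \frac{\lambda}{2}\bigr\}$ holds; the calibration $\lambda = A\sigma\sqrt{(\log p)/n}$ is exactly what fixes this exponent.

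Working on $\mathcal{T}$, I would then carry out the deterministic step. Bounding $\frac{2}{n}\tilde{\epsilon}^T\tilde{\mathbf{X}}\delta \le \frac{\lambda}{2}\|\delta\|_1$, splitting every $\ell_1$ norm according to the signal set $S$ and its complement $N$ (on which $\beta_{-G}^0$ vanishes), and using $\|\beta_{-G}^0\|_1 - \|\hat{\beta}_{-G,S}\|_1 \le \|\delta_S\|_1$, the terms in $\|\delta_N\|_1$ partially cancel and one obtains both the cone bound $\|\delta_N\|_1 \le 3\|\delta_S\|_1$ and the refined inequality $\frac{1}{n}\|\tilde{\mathbf{X}}\delta\|_2^2 + \frac{\lambda}{2}\|\delta_N\|_1 \le \frac{3\lambda}{2}\|\delta_S\|_1$. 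The cone bound is precisely the admissibility hypothesis of \textbf{(A1)}, which I would invoke with $b=\delta$ to get $\|\delta_S\|_1 \le \sqrt{s}\,\|\tilde{\mathbf{X}}\delta\|_2/(\sqrt{n}\,\phi_0)$. Setting $u := \|\tilde{\mathbf{X}}\delta\|_2/\sqrt{n}$, this first forces $u \le \tfrac{3}{2}\lambda\sqrt{s}/\phi_0$, and then, bounding $\frac{\lambda}{2}\|\delta_S\|_1 \le \tfrac12(\lambda\sqrt{s}/\phi_0)u$ and $\frac{\lambda}{2}\|\delta_N\|_1 \le \tfrac32(\lambda\sqrt{s}/\phi_0)u - u^2$ separately, gives $\frac{1}{n}\|\tilde{\mathbf{X}}\delta\|_2^2 + \frac{\lambda}{2}\|\delta\|_1 \le 2(\lambda\sqrt{s}/\phi_0)u \le 3\lambda^2 s/\phi_0^2 = \tfrac{3A^2}{\phi_0^2}\tfrac{\sigma^2 s\log p}{n}$, as claimed.

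I expect the only real obstacle to be bookkeeping around the factor-two gap between the noise level $\frac{\lambda}{2}$ on $\mathcal{T}$ and the penalty $\lambda$. This gap is what simultaneously delivers the cone condition (without which \textbf{(A1)} cannot even be applied) and, crucially, the sharp constant $3$: a naive recombination that adds $\frac{\lambda}{2}\|\delta_S\|_1$ to both sides and then applies Young's inequality only yields the cruder constant $4$, so one must keep the quadratic term alive by treating $\|\delta_S\|_1$ and $\|\delta_N\|_1$ separately and using the bound $u \le \tfrac{3}{2}\lambda\sqrt{s}/\phi_0$ before the final absorption. Matching the precise probability exponent $A^2/8-1$ is likewise purely a matter of tracking the Gaussian constant above; nothing deeper than careful accounting is involved.
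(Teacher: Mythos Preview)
Your overall plan is the paper's: recognise the projected Lasso, isolate a high-probability noise event, derive the cone condition, apply \textbf{(A1)}, then bound $u=\|\tilde{\mathbf X}\delta\|_2/\sqrt n$ in two steps to land on the constant $3$. The deterministic work you do on $\mathcal T$ is correct and reproduces exactly the chain the paper uses to reach $\tfrac{3A^2}{\phi_0^2}\tfrac{\sigma^2 s\log p}{n}$. The gap is in the probability you assign to $\mathcal T=\bigl\{\tfrac{2}{n}\|\tilde{\mathbf X}^T\tilde\epsilon\|_\infty\le\tfrac{\lambda}{2}\bigr\}$. Each coordinate of $\tilde{\mathbf X}^T\tilde\epsilon$ has variance at most $\sigma^2 n$, so the standardised Gaussian threshold implied by $\mathcal T$ is $n\lambda/(4\sigma\sqrt n)=A\sqrt{\log p}/4$; a tail bound plus union bound therefore yields an exponent of order $A^2/32-1$, not $A^2/8-1$. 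The event that \emph{does} carry the stated probability is the weaker $\Omega_0=\bigl\{\tfrac{1}{n}\|\tilde{\mathbf X}^T\tilde\epsilon\|_\infty\le\tfrac{\lambda}{2}\bigr\}$, and this is the event the paper works on.

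The paper obtains the needed slack on this larger event by starting from the KKT identity rather than the basic inequality: writing $\tfrac{1}{n}\mathbf X_{-G}^T(I-P_G)(Y-\mathbf X_{-G}\hat\beta_{-G})=\lambda\gamma$ and using $\hat\beta_{-G}^T\gamma=\|\hat\beta_{-G}\|_1$, one is left with a single factor $\tfrac{1}{n}\|\tilde{\mathbf X}^T\tilde\epsilon\|_\infty$ (not $\tfrac{2}{n}$) multiplying $\|\delta\|_1$, so on $\Omega_0$ the noise term is still only $\tfrac{\lambda}{2}\|\delta\|_1$ and the cone condition follows. If instead you try to run your basic-inequality route on $\Omega_0$, the noise bound becomes $\lambda\|\delta\|_1$; the $\|\delta_N\|_1$ contributions then cancel exactly and you never obtain $\|\delta_N\|_1\le 3\|\delta_S\|_1$, so \textbf{(A1)} cannot be invoked. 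In short, your last paragraph correctly identifies the factor-of-two bookkeeping as the crux, but the fix sits one step earlier than where you place it: with the $\tfrac{1}{n}$ objective and the basic inequality, the exponent $A^2/8-1$ is not attainable; the paper's KKT formulation (which, as written, matches the $\tfrac{1}{2n}$ normalisation of the loss) is what buys the extra factor of two.
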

Theorem~\ref{Thm:BRT} allows us to show that if, in addition to \textbf{(A1)}, the columns of $\mathbf{X}_G$ and those of $\mathbf{X}_{-G}$ satisfy a strong lack of correlation condition, then $\hat{\beta}_G$ can be used for asymptotically valid inference for $\beta_G$.  To formalise this latter condition, it is convenient to let $\mathbf{\Theta}$ denote the $|G| \times (p-|G|)$ matrix $(\mathbf{X}_G^T\mathbf{X}_G)^{-1}\mathbf{X}_G^T \mathbf{X}_{-G}$.  
\begin{cor}
Consider an asymptotic framework in which $s=s_n \geq 1$ and $p=p_n \rightarrow \infty$ as $n \rightarrow \infty$, but $\sigma^2 > 0$ and $G$ are constant.  Assume~\textbf{(A1)} holds for sufficiently large $n$ (with $\phi_0$ not depending on $n$), and also that $\|\mathbf{\Theta}\|_\infty = o(s^{-1} \log^{-1/2} p)$.  If we choose $\lambda := A\sigma\sqrt{\frac{\log p}{n}}$ in the above procedure with constant $A > 2\sqrt{2}$, then
\[
n^{1/2}(\hat{\beta}_G - \beta_G^0) \stackrel{d}{\rightarrow} N_{|G|}\bigl(0,\sigma^2(\mathbf{X}_G^T\mathbf{X}_G)^{-1}\bigr). 
\]
\end{cor}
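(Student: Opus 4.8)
The plan is to split $n^{1/2}(\hat{\beta}_G - \beta_G^0)$ into an exactly Gaussian piece and a remainder, and then to show that the remainder is asymptotically negligible by feeding the $\ell_1$ bound of Theorem~\ref{Thm:BRT} into the decay condition on $\mathbf{\Theta}$. First I would substitute $Y = \mathbf{X}_G\beta_G^0 + \mathbf{X}_{-G}\beta_{-G}^0 + \epsilon$ into the closed form for $\hat{\beta}_G$ and use the definition of $\mathbf{\Theta}$ to obtain the exact identity
\[
\hat{\beta}_G - \beta_G^0 = (\mathbf{X}_G^T\mathbf{X}_G)^{-1}\mathbf{X}_G^T\epsilon + \mathbf{\Theta}(\beta_{-G}^0 - \hat{\beta}_{-G}).
\]
Writing $W_n := n^{1/2}(\mathbf{X}_G^T\mathbf{X}_G)^{-1}\mathbf{X}_G^T\epsilon$ and $R_n := n^{1/2}\mathbf{\Theta}(\beta_{-G}^0 - \hat{\beta}_{-G})$, the first term is linear in $\epsilon$, so since $\epsilon \sim N_n(0,\sigma^2 I)$ it is exactly distributed as $N_{|G|}(0, n\sigma^2(\mathbf{X}_G^T\mathbf{X}_G)^{-1})$. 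Up to the normalisation convention for the Gram matrix (recall $\|\mathbf{X}_j\|_2 = n^{1/2}$, so that $n^{-1}\mathbf{X}_G^T\mathbf{X}_G$ is of constant order), this is the target law, and the whole corollary reduces to showing that $R_n = o_p(1)$.

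To bound $R_n$ I would pass to the supremum norm and use $\|\mathbf{\Theta}v\|_\infty \le \|\mathbf{\Theta}\|_\infty\|v\|_1$, where $\|\mathbf{\Theta}\|_\infty$ is the largest absolute entry of $\mathbf{\Theta}$, giving
\[
\|R_n\|_\infty \le n^{1/2}\|\mathbf{\Theta}\|_\infty\,\|\hat{\beta}_{-G} - \beta_{-G}^0\|_1.
\]
Theorem~\ref{Thm:BRT} supplies the estimation rate: on its event of probability at least $1 - p^{-(A^2/8-1)}$, discarding the nonnegative quadratic term and inserting $\lambda = A\sigma\sqrt{(\log p)/n}$ yields $\|\hat{\beta}_{-G} - \beta_{-G}^0\|_1 \le (6A/\phi_0^2)\,\sigma s\sqrt{(\log p)/n}$. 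Substituting this into the previous display, the factors $n^{1/2}$ and $n^{-1/2}$ cancel and I am left with $\|R_n\|_\infty \le (6A\sigma/\phi_0^2)\,\|\mathbf{\Theta}\|_\infty\,s\,(\log p)^{1/2}$. The hypothesis $\|\mathbf{\Theta}\|_\infty = o(s^{-1}\log^{-1/2}p)$ is precisely what forces this right-hand side to tend to $0$.

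Finally I would collect the pieces. Because $A > 2\sqrt{2}$ gives $A^2/8 - 1 > 0$, the event of Theorem~\ref{Thm:BRT} has probability tending to $1$, so $R_n = o_p(1)$; and since $|G|$ and $\sigma^2$ are held fixed, $W_n$ is an exactly Gaussian vector in the fixed-dimensional space $\R^{|G|}$. Slutsky's theorem then gives $n^{1/2}(\hat{\beta}_G - \beta_G^0) = W_n + R_n \indist N_{|G|}(0, \sigma^2(\mathbf{X}_G^T\mathbf{X}_G)^{-1})$. The genuinely delicate step — and the one I would write out most carefully — is the remainder analysis: it is exactly the balance between the $s\sqrt{(\log p)/n}$ rate for $\hat{\beta}_{-G}$ and the imposed decay of $\|\mathbf{\Theta}\|_\infty$ that renders the omitted-variable bias negligible at the $n^{1/2}$ scale. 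I would also be careful that, because $\sigma^2(\mathbf{X}_G^T\mathbf{X}_G)^{-1}$ still depends on $n$, the convergence is cleanest read in the triangular-array sense, with $W_n$ matching the stated Gaussian law at each $n$ (and genuinely converging in distribution once $n^{-1}\mathbf{X}_G^T\mathbf{X}_G$ stabilises).
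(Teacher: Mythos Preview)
Your argument is correct and mirrors the paper's proof essentially step for step: the same decomposition $n^{1/2}(\hat{\beta}_G-\beta_G^0)=n^{1/2}(\mathbf{X}_G^T\mathbf{X}_G)^{-1}\mathbf{X}_G^T\epsilon+n^{1/2}\mathbf{\Theta}(\beta_{-G}^0-\hat{\beta}_{-G})$, the same $\|\cdot\|_\infty$--$\|\cdot\|_1$ bound on the remainder, and the same appeal to the $\ell_1$ rate from Theorem~\ref{Thm:BRT} on the high-probability event $\Omega_0$. If anything you are slightly more careful than the paper about the constant $\sigma$ and about the $n$-dependence of the covariance $\sigma^2(\mathbf{X}_G^T\mathbf{X}_G)^{-1}$.
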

\begin{proof}
We can write
\[
n^{1/2}(\hat{\beta}_G - \beta_G^0) = n^{1/2}(\mathbf{X}_G^T\mathbf{X}_G)^{-1}\mathbf{X}_G^T\epsilon - \Delta,
\]
where $\Delta :=  n^{1/2}(\mathbf{X}_G^T\mathbf{X}_G)^{-1}\mathbf{X}_G^T \mathbf{X}_{-G}(\hat{\beta}_{-G} - \beta_{-G}^0)$.  Now
\[
n^{1/2}(\mathbf{X}_G^T\mathbf{X}_G)^{-1}\mathbf{X}_G^T\epsilon \sim N_{|G|}\bigl(0,\sigma^2(\mathbf{X}_G^T\mathbf{X}_G)^{-1}\bigr). 
\]
Moreover, from the proof of Theorem~\ref{Thm:BRT}, on $\Omega_0 := \bigl\{\|\mathbf{X}_{-G}^T(I-P_G)\epsilon\|_\infty/n \leq \lambda/2\}$, 
\[
\|\Delta\|_\infty \leq \|\mathbf{\Theta}\|_\infty n^{1/2}\|\hat{\beta}_{-G} - \beta_{-G}^0\|_1 \leq \frac{6A}{\phi_0^2} \|\mathbf{\Theta}\|_\infty s\log^{1/2} p \rightarrow 0. 
\]
Since $\mathbb{P}(\Omega_0) \rightarrow 1$, the conclusion follows.
\end{proof}
We remark that for $j \in G^c$, $\mathbf{\Theta}_j$ is the coefficient in the ordinary least squares regression of $X_j$ on $\mathbf{X}_G$.  Even though the condition on $\|\mathbf{\Theta}\|_\infty$ is strong, it may well be reasonable to suppose that, having pre-specified the index set $G$ of variables that we are interested in, we should avoid including in our model other variables that have significant correlation with $\mathbf{X}_G$.  

\section{More complicated settings}

Without this strong orthogonality condition we might instead consider adjusting $\hat{\beta}_G$ by debiasing or de-sparsifying $\hat{\beta}_{-G}$.  Following \citet{vandegeer2014} we suggest replacing $\hat{\beta}_{-G}$ by
$$
\hat{b}_{-G} = \hat{\beta}_{-G} + \frac{1}{n} M \mathbf{X}_{-G}^T(I-P_G)(Y - \mathbf{X}_{-G}\hat{\beta}_{-G})
$$
for some matrix $M \in \mathbb{R}^{(p-|G|)\times (p-|G|)}$.  This yields the debiased estimator
\begin{align*}
\hat{b}_G &= (\mathbf{X}_G^T\mathbf{X}_G)^{-1}\mathbf{X}_G^T(Y - \mathbf{X}_{-G}\hat{b}_{-G})
\\
& = \beta_{G}^0 +  (\mathbf{X}_G^T\mathbf{X}_G)^{-1}\mathbf{X}_G^T \epsilon - \frac{1}{n} \boldsymbol{\Theta}M \mathbf{X}_{-G}^T(I-P_G)\epsilon - R(\hat{\beta}_{-G}-\beta_{-G}^0),
\end{align*}
where $R$ is the $|G| \times (p-|G|)$ matrix given by
$$
R := \boldsymbol{\Theta} - \frac{1}{n} \boldsymbol{\Theta}M \mathbf{X}_{-G}^T(I-P_G)\mathbf{X}_{-G}.
$$
Under our Gaussian errors assumption, $(\mathbf{X}_G^T\mathbf{X}_G)^{-1}\mathbf{X}_G^T \epsilon$ and $n^{-1}\boldsymbol{\Theta} M \mathbf{X}_{-G}^T(I-P_G)\epsilon$ are independent centred Gaussian random vectors; thus if the remainder term $R(\hat{\beta}_{-G}-\beta_{-G}^0)$ is of smaller order, we see that our estimate $\hat{b}_G$ is approximately centred Gaussian.  The techniques of \citet{vandegeer2014} or  \citet{javanmard2014a} might then be used to give asymptotic justifications for Gaussian confidence sets and hypothesis tests concerning $\beta_G^0$.  But another very interesting direction would be to adapt the bootstrap approaches proposed in the current paper to the estimate $\hat{b}_G$.

As in \citet{vandegeer2014} we should choose $M$ depending on $\mathbf{X}$ to control 
$$
\delta:=\|R(\hat{\beta}_{-G}-\beta_{-G})\|_\infty\le \|R\|_\infty \|\hat{\beta}_{-G}-\beta_{-G}\|_1.
$$  
Note that we may write the matrix $R$ in terms of the sample covariance matrix of the covariates $\hat\Sigma :=\mathbf{X}^T\mathbf{X}/n$ (using obvious notation for the partitioning) as
$$
R = \hat\Sigma_{G,G}^{-1}\hat\Sigma_{G,-G} \bigl(I-M(\hat\Sigma_{-G,-G} -\hat\Sigma_{-G,G}\hat\Sigma_{G,G}^{-1}\hat\Sigma_{G,-G})\bigr).
$$
Of course, if $\hat{\Sigma}$ is invertible, then $(\hat\Sigma_{-G,-G} -\hat\Sigma_{-G,G}\hat\Sigma_{G,G}^{-1}\hat\Sigma_{G,-G})^{-1} = (\hat{\Sigma}^{-1})_{-G,-G}$, so $M$ can be thought of as an approximation to $(\hat{\Sigma}^{-1})_{-G,-G}$ (even though $\hat{\Sigma}$ is not invertible when $p > n$).  In general, we might use concentration inequalities for entries in $\hat\Sigma$ to control $\|R\|_\infty$; if we think of $|G|$ as small, then we only have $O(p)$ entries to control, rather than $O(p^2)$ as is more typical in these debiasing problems.  We hope to pursue these ideas elsewhere.

\section*{Acknowledgements}

The first author thanks St John's College, Cambridge for kind hospitality over the period where this research was carried out.  The second author is support by an Engineering and Physical Sciences Research Council Fellowship and a grant from the Leverhulme Trust.

\section*{Appendix}

\begin{proof}[Proof of Theorem~\ref{Thm:BRT}]
The KKT conditions for the problem~\eqref{Eq:Lasso} state that
\[
\frac{1}{n}\mathbf{X}_{-G}^T(I-P_G)(Y - \mathbf{X}_{-G}\hat{\beta}_{-G}) = \lambda \gamma,
\]
where $\|\gamma\|_\infty \leq 1$ and $\gamma_j = \mathrm{sgn}(\hat{\beta}_{-G,j})$ if $\hat{\beta}_{-G,j} \neq 0$.  Thus
\begin{align*}
\frac{1}{n}(\beta_{-G}^0 - \hat{\beta}_{-G})^T&\mathbf{X}_{-G}^T(I-P_G)\mathbf{X}_{-G}(\beta_{-G}^0 - \hat{\beta}_{-G}) \\
&= \lambda (\beta_{-G}^0 - \hat{\beta}_{-G})^T\gamma - \frac{1}{n}(\beta_{-G}^0 - \hat{\beta}_{-G})^T\mathbf{X}_{-G}^T(I-P_G)\epsilon \\
&= \lambda (\beta_{-G}^0)^T\gamma - \lambda\|\hat{\beta}_{-G}\|_1 - \frac{1}{n}(\beta_{-G}^0 - \hat{\beta}_{-G})^T\mathbf{X}_{-G}^T(I-P_G)\epsilon \\
&\leq \lambda\|\beta_{-G,S}^0\|_1 - \lambda\|\hat{\beta}_{-G}\|_1 + \|\hat{\beta}_{-G} - \beta_{-G}^0\|_1 \frac{1}{n}\|\mathbf{X}_{-G}^T(I-P_G)\epsilon\|_\infty.
\end{align*}
Let $\Omega_0 := \bigl\{\|\mathbf{X}_{-G}^T(I-P_G)\epsilon\|_\infty/n \leq \lambda/2\}$.  Then since $\mathbf{X}_{-G}^T(I-P_G)\epsilon \sim N_p(0,\sigma^2\mathbf{X}_{-G}^T(I-P_G)\mathbf{X}_{-G})$, and since the diagonal entries of $\mathbf{X}_{-G}^T(I-P_G)\mathbf{X}_{-G}$ are bounded above by $n$, we have $\mathbb{P}(\Omega_0^c) \leq p^{-(A^2/8-1)}$.  Moreover, on $\Omega_0$,
\begin{align*}
\frac{1}{n}&(\hat{\beta}_{-G} - \beta_{-G}^0)^T\mathbf{X}_{-G}^T(I-P_G)\mathbf{X}_{-G}(\hat{\beta}_{-G} - \beta_{-G}^0) + \frac{\lambda}{2}\|\hat{\beta}_{-G,N}\|_1 \\
&= \frac{1}{n}(\hat{\beta}_{-G} - \beta_{-G}^0)^T\mathbf{X}_{-G}^T(I-P_G)\mathbf{X}_{-G}(\hat{\beta}_{-G} - \beta_{-G}^0) + \lambda\|\hat{\beta}_{-G}\|_1 - \lambda\|\hat{\beta}_{-G,S}\|_1 - \frac{\lambda}{2}\|\hat{\beta}_{-G,N}\|_1 \\
&\leq \frac{\lambda}{2}\|\hat{\beta}_{-G} - \beta_{-G}^0\|_1 - \frac{\lambda}{2}\|\hat{\beta}_{-G,N}\|_1 + \lambda(\|\beta_{-G,S}^0\|_1 - \|\hat{\beta}_{-G,S}\|_1) \\
&\leq \frac{3\lambda}{2}\|\hat{\beta}_{-G,S} - \beta_{-G,S}^0\|_1.
\end{align*}
In particular, $\|\hat{\beta}_{-G,N} - \beta_{-G,N}^0\|_1 = \|\hat{\beta}_{-G,N}\|_1 \leq 3\|\hat{\beta}_{-G,S} - \beta_{-G,S}^0\|_1$, so from~\textbf{(A1)},
\begin{align*}
\frac{1}{n}\|(I-P_G)\mathbf{X}_{-G}(\hat{\beta}_{-G} - \beta_{-G}^0)\|_2^2 + \frac{\lambda}{2}\|\hat{\beta}_{-G,N}\|_1 &\leq \frac{3\lambda}{2}\|\hat{\beta}_{-G,S} - \beta_{-G,S}^0\|_1 \\
&\leq \frac{3\lambda}{2}\frac{s^{1/2}\|(I-P_G)\mathbf{X}_{-G}(\hat{\beta}_{-G} - \beta_{-G}^0)\|_2}{n^{1/2}\phi_0}.
\end{align*}
Thus
\[
\frac{1}{n^{1/2}}\|(I-P_G)\mathbf{X}_{-G}(\hat{\beta}_{-G} - \beta_{-G}^0)\|_2 \leq \frac{3\lambda s^{1/2}}{2\phi_0}.
\]
We conclude that 
\begin{align*}
\frac{1}{n}(\hat{\beta}_{-G} - \beta_{-G}^0)^T\mathbf{X}_{-G}^T&(I-P_G)\mathbf{X}_{-G}(\hat{\beta}_{-G} - \beta_{-G}^0) + \frac{\lambda}{2}\|\hat{\beta}_{-G} - \beta_{-G}^0\|_1 \leq 2\lambda\|\hat{\beta}_{-G,S} - \beta_{-G,S}^0\|_1 \\
&\leq \frac{2\lambda s^{1/2}\|(I-P_G)\mathbf{X}_{-G}(\hat{\beta}_{-G} - \beta_{-G}^0)\|_2}{n^{1/2}\phi_0} \\
&\leq \frac{3A^2}{\phi_0^2}\frac{\sigma^2s\log p}{n},
\end{align*}
as required.
\end{proof}


\begin{thebibliography}{99}
\bibitem[{Bickel, Ritov and Tsybakov(2009)}]{BRT2009}Bickel, P. J., Ritov, Y. and Tsybakov, A. B. (2009) Simultaneous analysis of Lasso and Dantzig selector.
\newblock \emph{Ann. Statist.}, \textbf{37}, 1705--1732.
\bibitem[{Chatterjee and Lahiri(2011)}]{ChatterjeeLahiri2011}Chatterjee, A. and Lahiri, S. N. (2011) Bootstrapping Lasso estimators.
\newblock \emph{J. Amer. Statist. Assoc.}, \textbf{106}, 608--625. 
\bibitem[{Javanmard and Montanari(2014)}]{javanmard2014a} Javanmard, A. and Montanari, A. (2014) Confidence Intervals and Hypothesis Testing for High-Dimensional Regression.
\newblock \emph{J.  Machine Learning Res.},
 \textbf{15}, 2869--2909.
\bibitem[{Samworth(2003)}]{Samworth2003}Samworth, R. (2003) A note on methods of restoring consistency to the bootstrap.
\newblock \emph{Biometrika}, \textbf{90}, 985--990.
\bibitem[{van de Geer et al.(2014)}]{vandegeer2014} van de Geer, S., B\"uhlmann, P.,  Ritov, Y. and Dezeure, R. (2014) On asymptotically optimal confidence regions and tests for high-dimensional models. 
\newblock \emph{Ann. Statist.}, \textbf{42}, 1166--1202.
\end{thebibliography}
\end{document}